\def\Z{\mathbb Z}
\def\Hash{H}
\def\wt#1{\qopname\relax o{w}(#1)}
\def\dist#1#2{\qopname\relax o{d}(#1,#2)}
\def\diam#1{\qopname\relax o{diam}(#1)}
\def\card#1{\lvert#1\rvert}
\def\abs#1{\lvert#1\rvert}
\def\setsep{\,|\,}
\def\ie{\textit{i.e.}}
\def\eg{\textit{e.g.}}
\def\cf{\textit{cf.}}
\newtheorem{thm}{Theorem}
\newtheorem{prop}[thm]{Proposition}
\newtheorem{lem}{Lemma}
\newtheorem{cor}{Corollary}
\theoremstyle{definition}
\newtheorem{defi}{Definition}
\newtheorem{rem}{Remark}
\begin{document}

\title{Memoryless Near-Collisions, Revisited}

\author{Mario Lamberger}
\address{%
Institute for Applied Information Processing and Communications,
Graz University of Technology,
Inffeldgasse 16a,
A--8010 Graz, Austria.}
\email{mario.lamberger@iaik.tugraz.at}

\author{Elmar Teufl}
\address{%
Mathematisches Institut,
Eberhard Karls Universit\"at T\"ubingen,
Auf der Morgenstelle 10,
D--72076 T\"ubingen, Germany.}
\email{elmar.teufl@uni-tuebingen.de}

\date{19 September 2012}

\begin{abstract}
In this paper we discuss the problem of generically finding near-collisions for
cryptographic hash functions in a memoryless way. A common approach is to
truncate several output bits of the hash function and to look for collisions of
this modified function. In two recent papers, an enhancement to this approach
was introduced which is based on classical cycle-finding techniques and
covering codes. This paper investigates two aspects of the problem of
memoryless near-collisions. Firstly, we give a full treatment of the trade-off
between the number of truncated bits and the success-probability of the
truncation based approach. Secondly, we demonstrate the limits of cycle-finding
methods for finding near-collisions by showing that, opposed to the collision
case, a memoryless variant cannot match the query-complexity of the
``memory-full'' birthday-like near-collision finding method.
\end{abstract}

\keywords{hash functions, memoryless near-collisions, covering codes}

\maketitle

\section{Introduction}
\label{sec:intro}

The field of hash function research has developed significantly in the light of
the attacks on some of the most frequently used hash functions like MD4, MD5
and SHA-1. As a consequence, academia and industry started to evaluate
alternative hash functions, \eg\ in the SHA-3 initiative organized by
NIST~\cite{sha3COMP}. During this ongoing evaluation, not only the three
classical security requirements \emph{collision resistance}, \emph{preimage
resistance} and \emph{second preimage resistance} are considered. Researchers
look at (semi-)free-start collisions, near-collisions, distinguishers, etc.
A `behavior different from that expected of a random oracle' for the hash
function is undesirable as are weaknesses that are demonstrated only for the
compression function and not for the full hash function.

Coding theory and hash function cryptanalysis have gone hand in hand for quite
some time now, where a crucial part of the attacks is based on the search for
low-weight code words in a linear code
(\cf\ \cite{cryptoBihamC04,cryptoChabaudJ98,imaPramstallerRR05} among others).
In this paper, we want to elaborate on a newly proposed application of coding
theory to hash function cryptanalysis. In \cite{dcc_nc,sacryptLambergerR10}, it
is demonstrated how to use covering codes to find near-collisions for hash
functions in a memoryless way. We also want to refer to the recent
paper~\cite{Gordon2010Optimal} which considers similar concepts from the
viewpoint of locality sensitive hashing.

In all of the following, we will work with binary values, where we identify
$\{0,1\}^n$ with $\Z_2^n$. Let ``$+$'' denote the $n$-bit exclusive-or
operation. The Hamming weight of a vector $v\in\Z_2^n$ is denoted by
$\wt{v} = \card{\{i \setsep v_i = 1\}}$ and the Hamming distance of two vectors
by $\dist{u}{v} = \wt{u + v}$. The Handbook of Applied
Cryptography~\cite[page~331]{bookMenezesOV96} defines \emph{near-collision
resistance} of a hash function $\Hash$ as follows:

\begin{defi}[Near-Collision Resistance]\label{d:nearcollision}
It should be hard to find any two inputs $m$, $m^*$ with $m\neq m^*$ such that
$\Hash(m)$ and $\Hash(m^*)$ differ in only a small number of bits:
\begin{equation}\label{eq:NC_def}
\dist{\Hash(m)}{\Hash(m^*)} \leq \epsilon.
\end{equation}
\end{defi}

For ease of later use we also give the following definition:

\begin{defi}\label{d:eps-near}
A message pair $m, m^*$ with $m\neq m^*$ is called an
\emph{$\epsilon$-near-collision} for $\Hash$ if \eqref{eq:NC_def} holds.
\end{defi}

Collisions can be considered a special case of near-collisions with the
parameter $\epsilon = 0$. The generic method for finding collisions for a given
hash function is based on the \emph{birthday paradox} and attributed to
Yuval~\cite{Yuval1979HowToSwindle}. There are well established cycle-finding
techniques (due to Floyd, Brent, Nivasch,
\cf\ \cite{Brent1980Improved,Knuth1997TheArtOf2,Nivasch2004CycleDetection})
that remove the memory requirements from an attack based on the birthday
paradox (see also \cite{jocOorschotW99}). These methods work by repeated
iteration of the underlying hash function where in all of these applications
the function is considered to behave like a random mapping
(\cf\ \cite{FlajoletO1989Random,Harris1960Probability}).

In \cite{dcc_nc,sacryptLambergerR10}, the question is raised whether or not the
above mentioned cycle-finding techniques are also applicable to the problem of
finding near-collisions. We now briefly summarize the ideas of
\cite{dcc_nc,sacryptLambergerR10}.

Since Definitions~\ref{d:nearcollision} and \ref{d:eps-near} include collisions
as well, the task of finding near-collisions is easier than finding collisions.
We now want to have a look at generic methods to construct near-collisions
which are more efficient than the generic methods to find collisions.

In the following, let $B_{r}(x) := \{y\in\Z_2^n \setsep \dist{x}{y} \le r\}$
denote the \emph{Hamming ball} (or \emph{Hamming sphere}) around $x$ of radius
$r$. Furthermore, we denote by
$S_n(r) := \card{B_r(x)} = \sum_{i=0}^r\binom{n}{i}$ the cardinality of any
$n$-dimensional Hamming ball of radius $r$.

A simple adaption of the classical table-based birthday attack for finding
$\epsilon$-near-collisions is to start with an empty table, randomly select a
message $m$ and compute $\Hash(m)$ and then test whether the table contains an
entry $(H(m)+\delta,m^*)$ for some $\delta \in B_\epsilon(0)$ and arbitrary
$m^*$. If so, the pair $(m, m^*)$ is an $\epsilon$-near-collision. If not,
$(\Hash(m),m)$ is added to the table and repeat. Then, we know the following:

\begin{lem}[\cite{dcc_nc}]\label{lem:memory_NC}
Let $\Hash$ be an $n$-bit hash function. If we assume that $\Hash$ acts like a
random mapping, the average number of messages that we need to hash and store
in a table-based birthday-like attack before we find an
$\epsilon$-near-collision is $O( 2^{n/2} S_n(\epsilon)^{-1/2} )$.
\end{lem}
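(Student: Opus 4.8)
The plan is to treat the stored hash values as i.i.d.\ uniform samples and to recast the near-collision search as a birthday problem in an ``effective space'' of size $2^n/S_n(\epsilon)$. First I would record the modelling step: since $\Hash$ is assumed to behave like a random mapping and the queried messages are pairwise distinct, the successive hash values $X_1,X_2,\dots\in\Z_2^n$ may be taken independent and uniformly distributed. In this model the attack stops at the first index $\tau$ for which $X_\tau\in\bigcup_{j<\tau}B_\epsilon(X_j)$, equivalently $\dist{X_\tau}{X_j}\le\epsilon$ for some $j<\tau$, and the quantity to bound is $\mathbb{E}[\tau]$.

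Next I would set $M=\lceil c\,2^{n/2}S_n(\epsilon)^{-1/2}\rceil$ and prove that a block of $M$ fresh samples already contains an $\epsilon$-near-collision with probability bounded below by an absolute constant. Introduce the count $Y=\sum_{1\le i<j\le M}\mathbf 1[\dist{X_i}{X_j}\le\epsilon]$. Since for fixed $x$ the probability that a uniform point lies in $B_\epsilon(x)$ is exactly $S_n(\epsilon)/2^n$, one gets $\mathbb{E}[Y]=\binom M2 S_n(\epsilon)/2^n$, which the choice of $M$ makes a constant of order $c^2/2$. The key simplification I would exploit is that the indicator events are pairwise uncorrelated even when two pairs share one index: conditioning on the common point $X_i=x$, the events $\dist{x}{X_j}\le\epsilon$ and $\dist{x}{X_k}\le\epsilon$ are independent with probability $S_n(\epsilon)/2^n$ each, regardless of $x$. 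Hence every off-diagonal term of $\mathbb{E}[Y^2]$ equals the product of the two marginals, so $\operatorname{Var}(Y)\le\mathbb{E}[Y]$, and the second-moment (Paley--Zygmund) inequality yields $\Pr[Y\ge 1]\ge\mathbb{E}[Y]^2/\mathbb{E}[Y^2]\ge p$ for an absolute constant $p>0$ once $c$ is fixed.

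Finally I would pass from the single-block success probability back to $\mathbb{E}[\tau]$. Partition the stream $X_1,X_2,\dots$ into consecutive disjoint blocks of length $M$. Since a near-collision internal to any one block is in particular a near-collision for the whole process, ignoring collisions across blocks can only increase $\tau$; the blocks being independent, $\Pr[\tau>kM]\le(1-p)^k$. Summing the tail gives $\mathbb{E}[\tau]\le M\sum_{k\ge 0}(1-p)^k=M/p=O\bigl(2^{n/2}S_n(\epsilon)^{-1/2}\bigr)$, as claimed.

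I expect the only genuinely delicate point to be the variance estimate, namely verifying that pairs sharing a single index contribute no correlation; the advantage of this route is that the reduction to independent blocks entirely avoids the overlapping-balls complication that would arise in a direct survival-probability product $\prod_i\bigl(1-i\,S_n(\epsilon)/2^n\bigr)$, where the balls $B_\epsilon(X_j)$ need not be disjoint. That direct route is the more traditional birthday computation and could be used instead, at the cost of carefully controlling the overlap to turn the heuristic product into rigorous bounds.
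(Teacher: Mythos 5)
The paper does not actually prove this lemma: it is imported verbatim from \cite{dcc_nc} and used as a black box, so there is no in-paper argument to compare yours against. Judged on its own, your proof is correct and complete. The modelling step (i.i.d.\ uniform outputs for distinct queries under the random-mapping assumption) is the intended one; the expectation $\mathbb{E}[Y]=\binom{M}{2}S_n(\epsilon)/2^n$ is exact because $\card{B_\epsilon(x)}=S_n(\epsilon)$ for every $x$; and your key observation --- that two pair-indicators sharing one index are uncorrelated, since conditioning on the common point makes them independent with a conditional probability $S_n(\epsilon)/2^n$ that does not depend on that point --- is valid and gives $\operatorname{Var}(Y)\le\mathbb{E}[Y]$, hence $\mathbb{P}(Y\ge1)\ge\mathbb{E}[Y]/(1+\mathbb{E}[Y])\ge p>0$ by the second-moment inequality. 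The blocking step is also sound: an internal near-collision in any block is a fortiori a near-collision for the full stream, the blocks are independent, and the geometric tail bound gives $\mathbb{E}[\tau]\le M/p$, which is exactly the claimed $O(2^{n/2}S_n(\epsilon)^{-1/2})$ since the lemma only asserts an upper bound on the expectation. The route taken in \cite{dcc_nc} (and in most birthday-paradox treatments) is the direct survival-probability computation you mention at the end, approximating $\mathbb{P}(\tau>t)$ by $\prod_{i<t}(1-i\,S_n(\epsilon)/2^n)$; that computation yields the sharper constant $\sqrt{\pi/2}$ in front of $2^{n/2}S_n(\epsilon)^{-1/2}$ but, as you correctly note, treating the events ``$X_t$ avoids $\bigcup_{j<t}B_\epsilon(X_j)$'' as if the balls were disjoint is only a heuristic and needs an overlap argument to be made rigorous. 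Your second-moment-plus-blocks argument trades the exact constant for a clean, fully rigorous $O(\cdot)$ bound, which is all the statement requires. Only cosmetic caveats remain: you should note that $M\ge2$ (equivalently $\epsilon$ not so large that $S_n(\epsilon)$ is within a constant factor of $2^n$) so that $\binom{M}{2}\asymp M^2$, and that the stated bound concerns query complexity only, consistent with the paper's remark that the real cost of the table-based attack lies in the near-neighbour table lookups.
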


\begin{rem}
We want to note that in this paper we are measuring the complexity of a problem
by counting (hash) function invocations. This constitutes an adequate measure
in the case of the memoryless algorithms in this paper, however the real
computational complexity of the table-based algorithm above is dominated by the
memory access, as the problem of searching for an $\epsilon$-near-collision in
the table is much harder than testing for a collision.
\end{rem}

The first straight-forward approach to apply the cycle-finding algorithms to
the problem of finding near-collisions is a truncation based approach.

\begin{lem}\label{lem:plain_trunc}
Let $\Hash$ be an $n$-bit hash function. Let
$\tau_\epsilon\colon\Z_2^n\to\Z_2^{n-\epsilon}$ be a map that truncates
$\epsilon$ bits from its input at predefined positions. If we assume that
$\tau_\epsilon\circ \Hash$ acts like a random mapping, we can apply a
cycle-finding algorithm to the map $\tau_\epsilon\circ \Hash$ to find an
$\epsilon$-near-collision in a memoryless way with an expected complexity of
about $2^{(n-\epsilon)/2}$.
\end{lem}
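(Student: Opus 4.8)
The plan is to reduce the task of finding an $\epsilon$-near-collision for $\Hash$ to that of finding an ordinary collision for the truncated map $\tau_\epsilon\circ\Hash$, and then to invoke the standard memoryless cycle-finding machinery on the latter. The key observation is that truncation converts a near-collision condition into an exact collision condition: whenever two messages $m\neq m^*$ satisfy $\tau_\epsilon(\Hash(m))=\tau_\epsilon(\Hash(m^*))$, the values $\Hash(m)$ and $\Hash(m^*)$ agree in all $n-\epsilon$ untruncated coordinates and can therefore differ only in the $\epsilon$ truncated positions. This gives $\dist{\Hash(m)}{\Hash(m^*)}\le\epsilon$, so $(m,m^*)$ is an $\epsilon$-near-collision in the sense of Definition~\ref{d:eps-near}. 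Consequently it suffices to produce a single collision of $\tau_\epsilon\circ\Hash$.

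First I would arrange the iteration so that the cycle-finding algorithms apply. Those algorithms operate on an endofunction of a finite set, whereas $\tau_\epsilon\circ\Hash$ maps $\Z_2^n$ to $\Z_2^{n-\epsilon}$. To bridge this I would fix an injection $\iota\colon\Z_2^{n-\epsilon}\to\Z_2^n$ (for instance, padding the truncated positions with a fixed bit pattern) and consider the endofunction $f:=\tau_\epsilon\circ\Hash\circ\iota$ of $\Z_2^{n-\epsilon}$. A collision $f(x)=f(y)$ with $x\neq y$ then yields two messages $m=\iota(x)$ and $m^*=\iota(y)$, distinct because $\iota$ is injective, satisfying $\tau_\epsilon(\Hash(m))=\tau_\epsilon(\Hash(m^*))$; by the observation above this is exactly the desired $\epsilon$-near-collision.

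Next I would invoke the complexity analysis of the cycle-finding techniques of Floyd, Brent and Nivasch. Under the assumption that $\tau_\epsilon\circ\Hash$, and hence $f$, behaves like a random mapping on the set $\Z_2^{n-\epsilon}$ of size $N=2^{n-\epsilon}$, the standard theory of random mappings guarantees that the expected sum of tail length and cycle length of the iteration (the ``$\rho$''-shape) is $\Theta(\sqrt{N})$, and that the cycle-finding algorithm detects the resulting collision using $O(\sqrt{N})$ evaluations of $f$ and only constant memory. Since each evaluation of $f$ costs one invocation of $\Hash$, substituting $N=2^{n-\epsilon}$ yields the claimed expected complexity of about $2^{(n-\epsilon)/2}$.

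The one point I would treat carefully---and the only genuine obstacle in an otherwise direct argument---is ensuring that the collision returned by the cycle-finding method is a genuine two-preimage collision rather than a trivial coincidence, and that it survives the lift through $\iota$. Detecting a collision along the $\rho$-shape produces two distinct iterates of $f$ mapping to a common value, which is precisely what these algorithms certify; injectivity of $\iota$ then transports this distinctness to $m\neq m^*$. No modelling assumption beyond the stated random-mapping heuristic for $\tau_\epsilon\circ\Hash$ is required, since the heuristic for $f$ is inherited directly from it.
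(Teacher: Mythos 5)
Your proposal is correct and follows essentially the same route as the paper, which simply applies the random-mapping results of Flajolet--Odlyzko and Harris to $\tau_\epsilon\circ\Hash$ viewed as a mapping with output length $n-\epsilon$; you have merely written out the implicit details (the truncation-collision-implies-near-collision observation and the injection needed to obtain an endofunction).
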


\begin{proof}
Under the assumptions of the lemma, the results from
\cite{FlajoletO1989Random,Harris1960Probability} are applied to a random
mapping with output length $n-\epsilon$.
\end{proof}

\section{A Thorough Analysis of the Truncation Approach}
\label{sec:trunc}

As indicated in \cite{dcc_nc}, a simple idea to improve the truncation based
approach is to truncate more than $\epsilon$ bits. That is, in order to find an
$\epsilon$-near-collision we simply truncate $\mu$ bits with $\mu > \epsilon$.
A cycle-finding method applied to $\tau_\mu\circ\Hash$ has an expected
complexity of $2^{(n-\mu)/2}$ and deterministically finds two messages $m,m^*$
such that $\dist{\Hash(m)}{\Hash(m^*)} \le \mu$. However, we can look at the
probability that these two messages $m,m^*$ satisfy
$\dist{\Hash(m)}{\Hash(m^*)} \le \epsilon$ which is
$2^{-\mu}\sum_{i=0}^{\epsilon} \binom{\mu}{i} = 2^{-\mu} S_\mu(\epsilon)$.

For a truly memoryless approach, multiple runs of the cycle-finding algorithm
are interpreted as independent events. Therefore, the expected complexity to
find an $\epsilon$-near-collision can be obtained as the product of the
expected complexity to find a cycle, and the expected number of repetitions of
the cycle-finding algorithm, \ie\ the reciprocal value of the probability that
a single run finds an $\epsilon$-near-collision. In other words, we end up with
an expected complexity of
\begin{equation}\label{eq:opt_proj}
2^{(n+\mu)/2} S_\mu(\epsilon)^{-1} = 2^{(n+\mu)/2} \,
	\left(
		\sum_{i=0}^{\epsilon} \binom{\mu}{i}
	\right)^{-1}
\end{equation}

\begin{rem}\label{rem:trunc_dcc}
In \cite{dcc_nc}, the above approach was already proposed with
$\mu=2\epsilon+1$. In this case \eqref{eq:opt_proj} results in a complexity of
\[ 2^{(n+2\epsilon+1)/2} S_{2\epsilon+1}(\epsilon)^{-1}
	= 2^{(n+1)/2-\epsilon}, \]
which clearly improves upon Lemma~\ref{lem:plain_trunc}. Here we have used that
$S_{2\epsilon+1}(\epsilon)
	= \frac12 S_{2\epsilon+1}(2\epsilon+1)
	= 2^{2\epsilon}$.
\end{rem}

An interesting question that now arises is to find the number of truncated bits
$\mu$ that constitutes the best trade-off between a larger $\mu$, \ie\ a
faster cycle-finding part, and a higher number of repetitions for this
probabilistic approach. In other words, we would like to determine the value of
$\mu$ which minimizes \eqref{eq:opt_proj} for a given $\epsilon$. Analogously,
we can search for an integer $\mu > \epsilon$ such that for a given $\epsilon$
the expression $2^{-\mu/2} S_\mu(\epsilon)$ is maximized. For small values of
$\epsilon$, values for $\mu$ were already computed in \cite{dcc_nc} by an
exhaustive search. In this section, we want so solve this problem analytically.

We first show a result that tells us something about the behavior of the
sequence of real numbers
\begin{equation}\label{eq:seq}
a_\mu := 2^{-\mu/2} S_\mu(\epsilon)
	= 2^{-\mu/2} \sum_{i=0}^{\epsilon} \binom{\mu}{i}.
\end{equation}
We want to note that based on the origin of the problem, we are only interested
in values $a_\mu$ for $\mu > \epsilon$. Our analysis is still valid starting
with $\mu=1$. We will need the following two properties of sequences:

\begin{defi}\label{def:log_unimod}
Let $a_\mu$ be a real-valued sequence.
\begin{enumerate}[(i)]
\item A sequence $a_\mu$ is called \emph{unimodal} in $\mu$,
	if there exists an index $t$ such that $a_1\le a_2 \le \dots \le a_t$
	and $a_t \ge a_{t+1} \ge a_{t+2} \ge \dots$
	The index $t$ is called a \emph{mode} of the sequence.
\item A sequence $a_\mu$ is called \emph{log-concave}, if
	$a_\mu^2 \ge a_{\mu-1}a_{\mu+1}$
	holds for every $\mu$. If $\ge$ is replaced by $>$,
	we speak of a \emph{strictly log-concave} sequence.
\end{enumerate}
\end{defi}

\begin{lem}\label{prop:unimod}
The sequence $a_\mu$ defined in \eqref{eq:seq} is strictly log-concave and
therefore also unimodal.
\end{lem}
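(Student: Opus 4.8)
The plan is to prove strict log-concavity directly, since log-concavity immediately implies unimodality for a sequence of positive terms (if $a_\mu^2 > a_{\mu-1}a_{\mu+1}$ everywhere, the ratios $a_{\mu+1}/a_\mu$ are strictly decreasing, so once the sequence starts to decrease it can never increase again). So the real content is the inequality $a_\mu^2 > a_{\mu-1}a_{\mu+1}$ for the sequence $a_\mu = 2^{-\mu/2}S_\mu(\epsilon)$. Here I first note that the factor $2^{-\mu/2}$ contributes $(2^{-\mu/2})^2 = 2^{-\mu} = 2^{-(\mu-1)/2}\cdot 2^{-(\mu+1)/2}$ to both sides, so it cancels entirely. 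Hence it suffices to show that the integer sequence $b_\mu := S_\mu(\epsilon) = \sum_{i=0}^{\epsilon}\binom{\mu}{i}$ is itself strictly log-concave, i.e.\ $b_\mu^2 > b_{\mu-1}\,b_{\mu+1}$.

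First I would record the recurrence coming from Pascal's rule. Since $\binom{\mu}{i} = \binom{\mu-1}{i} + \binom{\mu-1}{i-1}$, summing over $0\le i\le\epsilon$ telescopes the second group and gives the clean identity
\begin{equation}\label{eq:Srecurrence}
S_\mu(\epsilon) = 2\,S_{\mu-1}(\epsilon) - \binom{\mu-1}{\epsilon}.
\end{equation}
This is the key structural fact: it expresses $b_\mu$ in terms of $b_{\mu-1}$ and a single binomial coefficient. My plan is to substitute \eqref{eq:Srecurrence} (and its analogue relating $b_{\mu+1}$ to $b_\mu$) into the target inequality $b_\mu^2 - b_{\mu-1}b_{\mu+1} > 0$ and reduce everything to a statement purely about binomial coefficients, where I can hope to use the classical fact that a single row of binomials, and indeed partial sums of binomials, behave log-concavely.

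Carrying this out, writing $c_\mu := \binom{\mu}{\epsilon}$, the recurrence gives $b_{\mu+1} = 2b_\mu - c_\mu$ and $b_\mu = 2b_{\mu-1} - c_{\mu-1}$. Substituting, I would compute
\begin{equation}\label{eq:logconc_reduction}
b_\mu^2 - b_{\mu-1}b_{\mu+1}
 = b_\mu^2 - b_{\mu-1}\bigl(2b_\mu - c_\mu\bigr)
 = b_\mu\bigl(b_\mu - 2b_{\mu-1}\bigr) + b_{\mu-1}c_\mu
 = b_{\mu-1}c_\mu - b_\mu\,c_{\mu-1},
\end{equation}
using $b_\mu - 2b_{\mu-1} = -c_{\mu-1}$ in the last step. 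So the whole question collapses to showing $b_{\mu-1}\binom{\mu}{\epsilon} > b_\mu\binom{\mu-1}{\epsilon}$, equivalently $b_{\mu-1}/b_\mu > \binom{\mu-1}{\epsilon}/\binom{\mu}{\epsilon} = (\mu-\epsilon)/\mu$. This I expect to settle by a direct term-by-term comparison: the claim $\mu\,S_{\mu-1}(\epsilon) > (\mu-\epsilon)\,S_\mu(\epsilon)$ should follow from comparing $\mu\binom{\mu-1}{i}$ against $(\mu-\epsilon)\binom{\mu}{i}$ summand by summand, since each individual inequality reduces to $\mu\cdot\frac{\mu-i}{\mu} = \mu-i \ge \mu-\epsilon$ for $0\le i\le\epsilon$, with strict inequality for $i<\epsilon$.

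The main obstacle I anticipate is not any single computation but getting the reduction \eqref{eq:logconc_reduction} to land on a \emph{strict} inequality and making sure the edge cases are handled: the recurrence \eqref{eq:Srecurrence} relies on $\binom{\mu-1}{i-1}$ vanishing correctly at the endpoints, and strictness in the final term-by-term comparison comes only from the terms with $i<\epsilon$, so I must confirm that the range $\mu>\epsilon$ (indeed even $\mu\ge 1$, as the statement claims) guarantees at least one strictly contributing term and that all quantities $b_\mu$, $c_\mu$ stay positive. Once strict log-concavity is established on the whole range, unimodality is immediate from the monotone-ratio argument above, completing the proof.
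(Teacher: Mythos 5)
Your proof is correct and takes essentially the same route as the paper: after cancelling the $2^{-\mu/2}$ factors, the paper also uses Pascal's rule to collapse $S_\mu(\epsilon)^2 - S_{\mu-1}(\epsilon)S_{\mu+1}(\epsilon)$ to the single inequality $\binom{\mu}{\epsilon}\sum_{i=0}^{\epsilon-1}\binom{\mu-1}{i} > \binom{\mu-1}{\epsilon}\sum_{i=0}^{\epsilon-1}\binom{\mu}{i}$, which is exactly your $b_{\mu-1}c_\mu > b_\mu c_{\mu-1}$ after the common $i=\epsilon$ terms cancel, and settles it by the same term-by-term comparison using $\mu\binom{\mu-1}{i}=(\mu-i)\binom{\mu}{i}$. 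Your substitution of the recurrence $S_{\mu+1}(\epsilon)=2S_\mu(\epsilon)-\binom{\mu}{\epsilon}$ into the $2\times 2$ determinant is just a slightly tidier packaging of the paper's double-sum cancellation, and your attention to strictness (at least one index $i<\epsilon$, guaranteed by $\epsilon\ge 1$ and $\mu>\epsilon$) matches the paper's closing remark.
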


\begin{proof}
It is a well known fact that a log-concave sequence is also unimodal, \cf\ for
example \cite{Stanley1986Unimodal}. So in order to show that \eqref{eq:seq} is
strictly log-concave we have to show that for any $\epsilon\ge 1$,
\begin{equation}\label{eq:start_ineq}
\sum_{i=0}^\epsilon \sum_{j=0}^\epsilon \binom\mu i \binom\mu j
	> \sum_{i=0}^\epsilon \sum_{j=0}^\epsilon
		\binom{\mu-1}{i}\binom{\mu+1}{j}
\end{equation}
holds. By using the recursion for the binomial coefficient twice, we can
transform the inequality \eqref{eq:start_ineq} into
\begin{align*}
\sum_{i=0}^\epsilon\sum_{j=0}^\epsilon
	\Biggl[\binom{\mu-1}{i}+\binom{\mu-1}{i-1}\Biggr] \binom\mu j
> \sum_{i=0}^\epsilon\sum_{j=0}^\epsilon
	\binom{\mu-1}{i} \Biggl[\binom\mu j+\binom{\mu}{j-1}\Biggr],
\end{align*}
which boils down to the inequality
\[ \binom\mu\epsilon \sum_{i=0}^{\epsilon-1} \binom{\mu-1}{i}
	> \binom{\mu-1}{\epsilon}\sum_{i=0}^{\epsilon-1} \binom{\mu}{i}. \]
By direct computation using the definition of the binomial coefficient, it is
easy to see that each summand on the left is strictly larger than the
respective summand on the right, simply because $\epsilon > i$.
\end{proof}

The strict log-concavity guarantees us the existence of at most two adjacent
indices for which the sequence $a_\mu$ attains its global maximum. But if there
would be an index $t$, such that $a_t=a_{t+1}$ is maximal, the definition of
the sequence $a_\mu$ in \eqref{eq:seq} shows that this would imply the
existence of two positive integers $a,b$ such that $a = \sqrt2\,b$, which is
clearly not possible. Therefore, the mode of the sequence is indeed unique.

In order to find the mode of $a_\mu$, we have to investigate some properties of
truncated sums of binomial coefficients. There are well known bounds for the
sum $S_\mu(\epsilon)$, which yield upper and lower bounds for the optimal value
of $\mu$. As we are interested in an asymptotically correct approximation for
the optimal $\mu$, we need to derive an asymptotic expansion of
$S_\mu(\epsilon)$ which seems to be hard to find in the literature.
Notationally, we use $f(\mu) \sim g(\mu)$ if
$\lim_{\mu\to\infty} f(\mu)/g(\mu) = 1$ and $f(\mu) \asymp g(\mu)$ if there
exist positive $c_1,c_2,\mu_0$ such that
$c_1\cdot\abs{g(\mu)} \le \abs{f(\mu)} \le c_2\cdot\abs{g(\mu)}$ for all
$\mu\ge\mu_0$.

\begin{prop}\label{prop:asym_sum}
Let $S_\mu(\epsilon) = \sum_{k=0}^\epsilon \binom \mu k$ and define
$\alpha := \frac\epsilon\mu$. If we assume, that there exist constants
$c_1,c_2$ such that $0 < c_1 \le \alpha \le c_2 < \frac 12$, then we have
\begin{equation}\label{eq:twoterm}
S_\mu(\epsilon) = \binom\mu\epsilon \cdot
	\biggl(
		\frac{\mu-\epsilon}{\mu-2\epsilon}
		- \frac{2\epsilon(\mu-\epsilon)}{(\mu-2\epsilon)^3}
		+ O(\mu^{-2})
	\biggr),
\end{equation}
for $\epsilon,\mu \to \infty$ and thus
\[ S_\mu(\epsilon) \sim
	\frac{\mu-\epsilon}{\mu-2\epsilon} \cdot \binom\mu\epsilon. \]
\end{prop}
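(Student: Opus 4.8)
The plan is to derive the asymptotic expansion of $S_\mu(\epsilon)$ by factoring out the dominant term $\binom{\mu}{\epsilon}$ and analyzing the ratio of consecutive binomial coefficients in the tail sum. The key observation is that for the regime $\alpha = \epsilon/\mu$ bounded away from $\tfrac12$, the summands $\binom{\mu}{k}$ decay geometrically as $k$ decreases from $\epsilon$, so the sum behaves like a perturbed geometric series whose ratio is controlled.

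First I would write $S_\mu(\epsilon) = \binom{\mu}{\epsilon}\sum_{j=0}^{\epsilon}\binom{\mu}{\epsilon-j}/\binom{\mu}{\epsilon}$ and compute the ratio $r_j := \binom{\mu}{\epsilon-j}/\binom{\mu}{\epsilon-j+1} = (\epsilon-j+1)/(\mu-\epsilon+j)$. For $j$ small this is close to $\epsilon/(\mu-\epsilon) = \alpha/(1-\alpha)$, which under the hypothesis $\alpha < \tfrac12$ is strictly less than $1$, giving geometric decay. The leading approximation replaces each ratio by the constant $\rho := \epsilon/(\mu-\epsilon)$, yielding the geometric sum $\sum_{j\ge 0}\rho^j = 1/(1-\rho) = (\mu-\epsilon)/(\mu-2\epsilon)$, which is exactly the claimed leading factor. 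The tail beyond $j=\epsilon$ and the truncation error are exponentially small relative to the $O(\mu^{-2})$ target, so extending the sum to infinity is harmless.

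Next I would extract the second-order term. The true ratios $r_j$ differ from the constant $\rho$ by a term of order $j/\mu$, so the product $\prod_{i=1}^{j} r_i$ expands as $\rho^{j}\bigl(1 + (\text{correction of order } j^2/\mu)\bigr)$. Summing $\rho^{j}$ times these corrections against the weight $j^2$ produces, after using $\sum_j j^2\rho^j = \rho(1+\rho)/(1-\rho)^3$ and simplifying with $1-\rho = (\mu-2\epsilon)/(\mu-\epsilon)$, the correction $-2\epsilon(\mu-\epsilon)/(\mu-2\epsilon)^3$ displayed in \eqref{eq:twoterm}. The bookkeeping here—tracking which terms land in the $O(\mu^{-2})$ remainder versus the explicit second-order coefficient—is where care is needed; one should Taylor-expand $\log r_j = \log\rho + (\text{linear in } j/\mu) + O(j^2/\mu^2)$ and keep terms consistently.

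\textbf{The main obstacle} I anticipate is making the error control uniform in the stated regime rather than pointwise. Since $\alpha$ ranges over a compact subinterval of $(0,\tfrac12)$, the geometric ratio $\rho$ stays bounded away from $1$, so all the series $\sum_j j^k \rho^j$ converge uniformly and the constants in the $O(\mu^{-2})$ term depend only on $c_1,c_2$; verifying this uniformity—and confirming that the truncation at $j=\epsilon$ contributes only $O(\rho^{\epsilon}) = O(e^{-c\mu})$, negligible against any polynomial bound—is the step most prone to hidden gaps. Once the two-term expansion \eqref{eq:twoterm} is established, the asymptotic equivalence $S_\mu(\epsilon) \sim \frac{\mu-\epsilon}{\mu-2\epsilon}\binom{\mu}{\epsilon}$ follows immediately by dropping the lower-order terms.
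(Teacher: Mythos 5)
Your proposal is correct and follows essentially the same route as the paper's proof: factor out $\binom{\mu}{\epsilon}$, view the sum as a perturbed geometric series with ratio $\beta=\alpha/(1-\alpha)$, Taylor-expand the logarithm of the ratio of binomial coefficients, and evaluate the resulting sums $\sum_j j^k\beta^j$ to get the leading factor $(\mu-\epsilon)/(\mu-2\epsilon)$ and the second-order correction. The one device you leave implicit is the paper's intermediate cutoff at $r=(\log\mu)^2$ (its ``discrete Laplace method''): the expansion $\beta^j(1+O(j^2/\mu))$ of the product of ratios is only a genuine asymptotic expansion when $j=o(\sqrt{\mu})$, so the paper truncates the sum at $j<r$, bounds the discarded range $r\le j\le\epsilon$ by the crude geometric estimate $O(c^r)$, and chooses $r$ so that $r^2c^r=o(\mu^{-2})$ --- which is precisely the uniformity issue you flagged as the delicate step.
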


\begin{proof}
For $k\le \epsilon$ we have
\begin{equation}\label{eq:binq}
\binom\mu k
	= \binom\mu\epsilon \prod_{i=0}^{\epsilon-k-1}
		\frac{\epsilon-i}{\mu-k-i}
	\le \binom\mu\epsilon \cdot
		\biggl(\frac{\epsilon}{\mu-\epsilon}\biggr)^{\epsilon-k}.
\end{equation}
Because of the requirements in the proposition we have
\[ \frac{\epsilon}{\mu-\epsilon} = \frac{\alpha}{1-\alpha}
	\le \frac{c_2}{1-c_2} < 1. \]
For sake of notation we set $\beta:=\frac{\alpha}{1-\alpha}$ and
$c:=\frac{c_2}{1-c_2}$. This then leads to
\begin{equation}\label{eq:sm}
\begin{aligned}
\binom\mu\epsilon
	&\le S_\mu(\epsilon)
	\le \binom\mu\epsilon \sum_{k=0}^\epsilon
		\biggl(\frac{\epsilon}{\mu-\epsilon}\biggr)^{\epsilon-k} \\
	&\le \binom\mu\epsilon \sum_{j=0}^\infty
		\biggl(\frac{\epsilon}{\mu-\epsilon}\biggr)^j
	= \frac{\mu-\epsilon}{\mu-2\epsilon} \cdot \binom\mu\epsilon
	\le \frac{1}{1-c} \cdot \binom\mu\epsilon.
\end{aligned}
\end{equation}
From equation \eqref{eq:sm} we learn that
$S_\mu(\epsilon) \asymp \binom\mu\epsilon$.

The following can be seen as a discrete version of Laplace's method to
approximate integrals (\cf\ \cite{deBruijnAsymptotic1981}).
\begin{align*}
S_\mu(\epsilon) = \sum_{k=0}^\epsilon \binom \mu k
	= \sum_{0\le k\le \epsilon-r} \binom \mu k
		+ \sum_{\epsilon-r<k\le\epsilon} \binom \mu k
	= S_\mu(\epsilon-r) + \sum_{0\le k<r} \binom{\mu}{\epsilon-k},
\end{align*}
where $r = r(\mu)$ is such that $r = o(\mu)$ for $\mu\to\infty$. We will
determine $r$ later.

Because of \eqref{eq:binq} and \eqref{eq:sm} we obtain
\[ S_\mu(\epsilon-r) \asymp \binom{\mu}{\epsilon-r}
	= \binom\mu\epsilon \cdot O(c^r). \]
This implies
\[ S_\mu(\epsilon) = \binom\mu\epsilon \cdot
	\Biggl( \sum_{0\le k<r}
		\prod_{i=0}^{k-1} \frac{\epsilon-i}{\mu-\epsilon+k-i}
		+ O(c^r) 
	\Biggr). \]
We now have a closer look at the product above:
\[ \prod_{i=0}^{k-1} \frac{\epsilon-i}{\mu-\epsilon+k-i} =
	\exp\Biggl(
		\sum_{i=0}^{k-1}
		\log\frac{\alpha-\frac i \mu}{1-\alpha+\frac k \mu-\frac i \mu}
	\Biggr). \]
For $x,y$ close to $0$ we have
\[ \log\frac{\alpha+x}{1-\alpha+y}
	= \log\beta + \frac1\alpha\cdot x
		- \frac1{1-\alpha}\cdot y + O(x^2 + y^2). \]
Since $0\le i<k<r$ and $r=o(\mu)$ we conclude
\begin{align*}
\log\frac{\alpha-\frac i \mu}{1-\alpha+\frac k \mu-\frac i \mu}
	= \log\beta - \frac{1}{(1-\alpha)}\cdot\frac k \mu
		- \frac{(1-2\alpha)}{\alpha(1-\alpha)}\cdot\frac i \mu
		+ O\biggl(\frac{k^2}{\mu^2}\biggr),
\end{align*}
where the error term is uniform in $0\le k< r$. With this we get
\begin{align*}
\prod_{i=0}^{k-1} \frac{\epsilon-i}{\mu-\epsilon+k-i}
&= \beta^k \exp\biggl(
		\frac{1-2\alpha}{2\alpha(1-\alpha)} \cdot \frac k \mu
		- \frac1{2\alpha(1-\alpha)} \cdot \frac{k^2}{\mu}
		+ O\biggl(\frac{k^3}{\mu^2}\biggr)
	\biggr) \\
&= \beta^k \, \biggl(
		1 + \frac{1-2\alpha}{2\alpha(1-\alpha)} \cdot \frac k \mu
		- \frac1{2\alpha(1-\alpha)} \cdot \frac{k^2}{\mu}
		+ O\biggl(\frac{k^3}{\mu^2}\biggr)
	\biggr).
\end{align*}
In total we obtain that $S_\mu(\epsilon)\big/\binom\mu\epsilon$ is equal to
\[ \sum_{0\le k<r} \beta^k \,
	\biggl(
		1 + \frac{1-2\alpha}{2\alpha(1-\alpha)} \cdot \frac k \mu
		- \frac1{2\alpha(1-\alpha)} \cdot \frac{k^2}{\mu}
		+ O\biggl(\frac{k^3}{\mu^2}\biggr)
	\biggr) \]
up to an error term which is bounded by $O(c^r)$. Since
\[ \sum_{0\le k<r} \beta^k \cdot \frac{k^3}{\mu^2} = O(\mu^{-2}) \]
and
\[ \sum_{k\ge r} \beta^k \,
	\biggl(
		1 + \frac{1-2\alpha}{2\alpha(1-\alpha)} \cdot \frac k \mu
		- \frac1{2\alpha(1-\alpha)} \cdot \frac{k^2}{\mu}
	\biggr)
	= O(r^2 c^r), \]
it follows that $S_\mu(\epsilon)\big/\binom\mu\epsilon$ is equal to
\[ \sum_{k\ge0} \beta^k \,
	\biggl(
		1 + \frac{1-2\alpha}{2\alpha(1-\alpha)} \cdot \frac k \mu
		- \frac1{2\alpha(1-\alpha)} \cdot \frac{k^2}{\mu}
	\biggr)
	+ O(\mu^{-2} + r^2 c^r). \]
Simplifying the infinite sum above yields
\[ S_\mu(\epsilon) = \binom\mu\epsilon \cdot
	\biggl(
		\frac{1-\alpha}{1-2\alpha}
		- \frac{2\alpha(1-\alpha)}{(1-2\alpha)^3} \cdot \frac 1 \mu
		+ O(\mu^{-2} + r^2 c^r)
	\biggr). \]
We now choose $r=r(\mu)=(\log \mu)^2$, since then $r^2 c^r = o(\mu^{-2})$,
which readily implies the statement using the definition of $\alpha$.
\end{proof}

The results of the Lem.~\ref{prop:unimod} and Prop.~\ref{prop:asym_sum} can now
be combined in the following way. We are interested in the behavior of $a_\mu$,
that is,
\[ a_\mu = 2^{-\mu/2} S_\mu(\epsilon)
	= 2^{-\mu/2} \sum_{i=0}^\epsilon \binom \mu i. \]
We have already seen that there will be a unique mode $t$ for the sequence.
Until this index, we have $a_{\mu+1}/a_{\mu} \ge 1$ and for all following
values of $\mu$, we have $a_{\mu+1}/a_{\mu} \le 1$. If we evaluate the
fraction, we get
$a_{\mu+1}/a_{\mu} = S_{\mu+1}(\epsilon)/(\sqrt2\,S_{\mu}(\epsilon))$.
From the recurrence relation of the binomial coefficient we get the analogous
recurrence relation for $S_\mu(\epsilon)$, namely
$S_{\mu+1}(\epsilon) = S_{\mu}(\epsilon) + S_{\mu}(\epsilon-1)
	= 2S_{\mu}(\epsilon)-\binom\mu\epsilon$.
If we use this in the above equation we end up with
\begin{equation}\label{eq:eq_for_proof}
\frac{a_{\mu+1}}{a_{\mu}} = \sqrt2\,
	\Biggl(
		1-\frac{\binom\mu\epsilon}{2S_\mu(\epsilon)}
	\Biggr)
\end{equation}
If we now use the asymptotic expansion in \eqref{eq:twoterm} we can compute an
approximation for $\mu = \mu(\epsilon)$ such that an optimum for
\eqref{eq:opt_proj} is found.

\begin{thm}\label{th:trunc_opt}
Let $\Hash$ be a hash function producing an $n$-bit hash value and let
$\epsilon\ge1$ be given. Let $\tau_\mu\colon\Z_2^n \to \Z_2^{n-\mu}$ be a map
that truncates $\mu$ fixed bits from an $n$-bit value, and suppose we apply a
cycle-finding algorithm to $\tau_\mu\circ\Hash$, which is assumed to act like a
random mapping. Then, there exists a unique optimal choice
$\mu = \mu(\epsilon) > \epsilon$ to find an $\epsilon$-near-collision. For
large $\epsilon$, we have
\begin{equation}\label{eq:opt_mu}
\mu(\epsilon) = (2+\sqrt2\,)(\epsilon-1) + O(\epsilon^{-1}).
\end{equation}
\end{thm}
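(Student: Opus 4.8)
The plan is to locate the mode $t$ of the sequence $a_\mu$ by analyzing the sign of $a_{\mu+1}/a_\mu - 1$, since we have already established (Lemma~\ref{prop:unimod}) that the mode exists and is unique. Starting from the exact identity \eqref{eq:eq_for_proof}, namely
\[ \frac{a_{\mu+1}}{a_\mu} = \sqrt2\,\Biggl(1 - \frac{\binom\mu\epsilon}{2S_\mu(\epsilon)}\Biggr), \]
I would set this ratio equal to $1$ to characterize the transition point. This gives the defining relation
\[ \frac{\binom\mu\epsilon}{2S_\mu(\epsilon)} = 1 - \frac{1}{\sqrt2} = \frac{\sqrt2-1}{\sqrt2}, \]
equivalently $S_\mu(\epsilon)/\binom\mu\epsilon = \frac{\sqrt2}{2(\sqrt2-1)} = \frac{1}{2-\sqrt2} = \frac{2+\sqrt2}{2}$.

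Next I would substitute the asymptotic expansion from Proposition~\ref{prop:asym_sum}. Writing $\alpha = \epsilon/\mu$, the two-term expansion \eqref{eq:twoterm} gives
\[ \frac{S_\mu(\epsilon)}{\binom\mu\epsilon} = \frac{1-\alpha}{1-2\alpha} - \frac{2\alpha(1-\alpha)}{(1-2\alpha)^3}\cdot\frac1\mu + O(\mu^{-2}). \]
To leading order the defining relation becomes $\frac{1-\alpha}{1-2\alpha} = \frac{2+\sqrt2}{2}$. Solving this linear-in-$\alpha$ equation yields a unique root $\alpha_0 \in (0,\tfrac12)$; a short computation gives $\alpha_0 = \frac{\sqrt2}{2+\sqrt2} = \frac{1}{1+\sqrt2} = \sqrt2 - 1$, so that to leading order $\mu \approx \epsilon/\alpha_0 = (1+\sqrt2)\,\epsilon = (2+\sqrt2)\,\tfrac{\epsilon}{\sqrt2}$. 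I would then verify that this matches the claimed leading coefficient $2+\sqrt2$ once the shift by $-1$ and the lower-order corrections are incorporated.

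The main obstacle, and the technically delicate part, is establishing the $O(\epsilon^{-1})$ error term and the precise shift to $\epsilon-1$ rather than $\epsilon$. This requires a bootstrapping argument: writing $\mu = (1+\sqrt2)\epsilon + c + \delta$ with $c$ a constant to be determined and $\delta$ small, I would expand the defining relation $S_\mu(\epsilon)/\binom\mu\epsilon = \frac{2+\sqrt2}{2}$ to one further order, using the $\tfrac1\mu$ correction term from \eqref{eq:twoterm} to fix the constant $c$, and then show that the residual is $O(\epsilon^{-1})$. A point requiring care is that $\mu(\epsilon)$ is integer-valued while the transition point computed from the asymptotic relation is real; since the mode is the unique integer where the ratio crosses $1$, the rounding contributes only an $O(1)$ ambiguity at the level of $\mu$, which must be absorbed into showing the formula is an asymptotic identity. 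Throughout I would use that $\alpha = \epsilon/\mu$ stays bounded away from $0$ and $\tfrac12$ along the optimal trajectory, so that Proposition~\ref{prop:asym_sum} applies uniformly and the implied constants are controlled.
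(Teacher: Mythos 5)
Your overall strategy is the same as the paper's: start from the exact ratio \eqref{eq:eq_for_proof}, set it equal to $1$, substitute the expansion of Proposition~\ref{prop:asym_sum}, and bootstrap. The defining relation $S_\mu(\epsilon)/\binom\mu\epsilon = \tfrac{1}{2-\sqrt2} = \tfrac{2+\sqrt2}{2}$ is correct. However, your solution of the leading-order equation contains an algebra error that derails the result. From $\frac{1-\alpha}{1-2\alpha} = \frac{2+\sqrt2}{2}$ one gets $(2+\sqrt2)\alpha - \alpha = \frac{2+\sqrt2}{2}-1$, i.e.\ $(1+\sqrt2)\alpha = \frac{\sqrt2}{2}$, hence
\[ \alpha_0 = \frac{\sqrt2}{2(1+\sqrt2)} = \frac{2-\sqrt2}{2} = \frac{1}{2+\sqrt2}, \]
not $\frac{\sqrt2}{2+\sqrt2} = \sqrt2-1$ as you wrote. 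Your value $\alpha_0 \approx 0.414$ gives $\mu \approx (1+\sqrt2)\epsilon \approx 2.414\,\epsilon$, whereas the correct root $\alpha_0 \approx 0.293$ gives $\mu \approx (2+\sqrt2)\epsilon \approx 3.414\,\epsilon$, which is what the theorem claims and what Table~\ref{t:mu} confirms (e.g.\ $\mu(100)=339$). Your remark that the discrepancy will be repaired ``once the shift by $-1$ and the lower-order corrections are incorporated'' cannot work: a constant shift and $O(\epsilon^{-1})$ corrections cannot alter the leading coefficient, and rewriting $(1+\sqrt2)\epsilon$ as $(2+\sqrt2)\epsilon/\sqrt2$ does not reconcile anything. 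Consequently the bootstrapping ansatz $\mu=(1+\sqrt2)\epsilon+c+\delta$ starts from the wrong leading term and the computation would not close.

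A second, smaller gap: before substituting Proposition~\ref{prop:asym_sum} into \eqref{eq:eq_for_proof} you must show that the mode actually lies in a region where $\epsilon/\mu$ is bounded away from $0$ and $\tfrac12$; you state that you ``would use'' this but give no argument. The paper establishes the a priori localization $(1+\sqrt2)\epsilon-1 \le t \le (2+\sqrt2)\epsilon$ by plugging the elementary lower bound $S_\mu(\epsilon)\ge\binom\mu\epsilon+\binom\mu{\epsilon-1}$ and the upper bound of \eqref{eq:sm} into \eqref{eq:eq_for_proof}; some such step is needed before the asymptotic expansion may be applied. Your point about the integer-valued mode versus the real transition point is legitimate but is absorbed by the $O(1)$ and $O(\epsilon^{-1})$ bookkeeping. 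Once the algebra is corrected and the localization supplied, your argument coincides with the paper's proof.
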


\begin{proof}
Substituting the lower bound
\[ S_\mu(\epsilon)
	\ge \binom\mu\epsilon + \binom\mu{\epsilon-1}
	= \frac{\mu+1}{\mu+1-\epsilon} \binom\mu\epsilon \]
and the upper bound of \eqref{eq:sm} in \eqref{eq:eq_for_proof} implies that
the mode $t$ of the sequence $a_\mu$ is bounded by
$(1+\sqrt2)\epsilon - 1 \le t \le (2+\sqrt2) \epsilon$.
For values of $\mu$ in the domain above we may use Prop.~\ref{prop:asym_sum},
since the quotient $\epsilon/\mu$ is easily seen to be bounded in the right
way. Furthermore, $\mu\asymp\epsilon$ and $\mu-2\epsilon\asymp\epsilon$. For
large values of $\epsilon$ we infer from
\[ S_\mu(\epsilon) = \binom\mu\epsilon \cdot
	\biggl(
		\frac{\mu-\epsilon}{\mu-2\epsilon} + O(\epsilon^{-1})
	\biggr), \]
that the mode $t$ must satisfy the equation
\[ 1 = (2-\sqrt2)
	\biggl(
		\frac{t-\epsilon}{t-2\epsilon} + O(\epsilon^{-1})
	\biggr). \]
Solving this equation yields $t = (2+\sqrt2)\epsilon + O(1)$. Now let us try to
obtain further terms of the asymptotic expansion of $t$ using bootstrapping
(see for instance \cite{deBruijnAsymptotic1981}). Using the full strength of
Prop.~\ref{prop:asym_sum} implies that the equation
\[ 1 = (2-\sqrt2)
	\biggl(
		\frac{t-\epsilon}{t-2\epsilon}
		- \frac{2\epsilon(t-\epsilon)}{(t-2\epsilon)^3}
		+ O(\epsilon^{-2})
	\biggr) \]
must be satisfied by the mode $t$. Using the ansatz
$t = (2+\sqrt2)\epsilon + r$, where $r = O(1)$, yields
\[ 2\bigl(1+\sqrt2\bigr) \Bigl( (3-2\sqrt2)r
	+ (2-\sqrt2) \Bigr) \epsilon^2
	+ O(\epsilon) = 0. \]
Hence we get $r=-(2+\sqrt2) + O(\epsilon^{-1})$ and
$t=(2+\sqrt2\,)(\epsilon-1) + O(\epsilon^{-1})$ which corresponds to
$\mu(\epsilon)$.
\end{proof}

We want to note that in both, Prop.~\ref{prop:asym_sum} and
Th.~\ref{th:trunc_opt}, it is possible to compute an arbitrary number of terms
of the asymptotic expansions \eqref{eq:twoterm} and \eqref{eq:opt_mu}. We end
this section with Table~\ref{t:mu} demonstrating the quality of the
approximation of \eqref{eq:opt_mu}. The actual values for $\mu(\epsilon)$ are
produced by an exhaustive search and for simplicity, \eqref{eq:opt_mu} is
replaced with $\lceil(2+\sqrt2)(\epsilon-1)\rceil$.

\begin{table}[ht]
\begin{center}
\caption{Comparison of $\mu(\epsilon)$ and
	$\mu^*(\epsilon) := \lceil(2+\sqrt2)(\epsilon-1)\rceil$
	for certain values of $\epsilon$.}
\label{t:mu}
\medskip
\footnotesize
\begin{tabular}{@{}c*{12}{@{\hskip1mm}>{\hfill}p{6mm}}@{}}
\toprule
$\epsilon$
	& 1 & 2 & 3 &  4 & \dots & 8  & 9  & 10 & \dots &  98 &  99 & 100 \\
\midrule
$\mu(\epsilon)$
	& 2 & 5 & 8 & 11 & \dots & 25 & 28 & 32 & \dots & 332 & 335 & 339 \\
$\mu^*(\epsilon)$
	& 0 & 4 & 7 & 11 & \dots & 24 & 28 & 31 & \dots & 332 & 335 & 339 \\
\bottomrule
\end{tabular}
\end{center}
\end{table}

\section{Limitations of Memoryless Near-Collisions}
\label{sec:limits}

A drawback to the truncation based solution is of course that we can only find
$\epsilon$-near-collisions of a limited shape (depending on the fixed bit
positions), so only a fraction of all possible $\epsilon$-near-collisions can
be detected, namely $S_{\mu}(\epsilon)/S_n(\epsilon)$.

To improve upon this, \cite{dcc_nc,sacryptLambergerR10} had the idea is to
replace the projection $\tau_\epsilon$ by a more complicated function $g$,
where $g$ is the decoding operation of a certain \emph{covering code}
$\mathcal{C}$. Let $R=R(C)$ be the \emph{covering radius} of a code
$\mathcal{C}$, that is
$R(\mathcal{C}) = \max_{x \in \Z_2^{n}} \min_{c \in \mathcal{C}} \dist{x}{c}$.

\begin{thm}[\cite{dcc_nc}]\label{th:coding}
Let $\Hash$ be a hash function of output size $n$. Let $\mathcal{C}$ be a
covering code of the same length $n$, size $K$ and covering radius
$R(\mathcal{C})$ and assume there exists an efficiently computable map $g$ such
that $g\colon\Z_2^n \to \mathcal{C}$, where $x \mapsto c$ with
$\dist{x}{c} \leq R(\mathcal{C})$. If we further assume that $g\circ\Hash$ acts
like a random mapping, in the sense that the expected cycle and tail lengths
are the same as for the iteration of a truly random mapping on a space of size
$K$, then we can find $2R(\mathcal{C})$-near-collisions for $\Hash$ with a
complexity of about $\sqrt{K}$ and with virtually no memory requirements.
\end{thm}

An extensive amount of work in the theory of covering codes is devoted to
derive upper and lower bounds for $K$ (when $n$ and $R$ are given) and to
construct codes achieving these bounds
(\cf\ \cite{bookCoveringCodes1997,Struik1994AnImprovementOf,Wee1988Improved}).
The authors of \cite{sacryptLambergerR10} have investigated a class of
efficient codes suitable for the approach outlined in Th.~\ref{th:coding}. The
approach via covering codes constitutes an improvement over the purely
truncation based approach. However, (depending on $\epsilon$) the
query-complexity of the approach outlined in Th.~\ref{th:coding} is larger than
the expected query-complexity of the table-based birthday method,
\cf\ Lem.~\ref{lem:memory_NC}.

\begin{rem}\label{rem:code_prob}
We briefly want to mention the possibility of considering a probabilistic
version of the covering code approach in an analogous manner to the approach in
Sec.~\ref{sec:trunc}. In other words, what is the probability to find a
$(2R-1)$-near-collision if the covering radius is $R$? This problem has also
been studied in \cite{dcc_nc} with the outcome that in general, finding a
closed expression like \eqref{eq:opt_proj} is beyond reach. Numerical
experiments for relevant values of $n$ and $\epsilon = 2R$ show, that
increasing the covering radius is rarely bringing an improvement. We use
\cite[Eq.~(20)]{dcc_nc} together with the optimal solution from
\cite{sacryptLambergerR10} to compute complexities for small values of
$\epsilon$ in Table~\ref{t:rho_general}.
\end{rem}

The limitations of the covering code approach are inherent to the \emph{sphere
covering bound}, which states that $K \ge 2^{n}/S_n(R)$
(\cf\ \cite{bookCoveringCodes1997}). Since we use codes with covering radius
$R$ to find $2R$-near-collisions, that is, $\epsilon = 2R$, the sphere covering
bound implies that the size $K$ of the code has to be larger than
$K \ge 2^n/S_n(R) \gg 2^n / S_n(2R)$, where the latter would be the desired
quantity to match the complexity of Lem.~\ref{lem:memory_NC} to find an
$\epsilon$-near-collision.

In the following, we want to investigate, if there are other possibilities to
choose a mapping $g$ such that collisions for $g\circ\Hash$ imply
$\epsilon$-near-collisions for $\Hash$. In \cite{dcc_nc} it was shown, that the
``perfect'' mapping $g$ is beyond reach:

\begin{lem}[\cite{dcc_nc}]\label{lem:no_g}
Let $\epsilon \ge 1$, let $\Hash$ be a hash function and let $g$ be a function
such that
\[ \dist{\Hash(m)}{\Hash(m^*)}
	\leq \epsilon \Leftrightarrow g(\Hash(m))
	= g(\Hash(m^*)) \]
holds. Then, $g$ is a constant map and
$\dist{\Hash(m)}{\Hash(m^*)} \leq \epsilon$ for all $m, m^*$.
\end{lem}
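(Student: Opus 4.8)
The plan is to exploit the fact that the right-hand side of the biconditional, namely $g(\Hash(m)) = g(\Hash(m^*))$, automatically describes an \emph{equivalence relation}: the relation $x \sim y \Leftrightarrow g(x) = g(y)$ on hash values is the kernel of $g$, hence reflexive, symmetric and transitive. The hypothesis therefore forces the relation $x \approx y \Leftrightarrow \dist{x}{y} \le \epsilon$ to coincide with $\ker g$ and, in particular, to be an equivalence relation as well. First I would record that reflexivity ($\dist{x}{x} = 0 \le \epsilon$) and symmetry ($\dist{x}{y} = \dist{y}{x}$) of ``being within distance $\epsilon$'' hold trivially, so that all the content is concentrated in transitivity.

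The crux is that the Hamming relation ``distance at most $\epsilon$'' is \emph{not} transitive on $\Z_2^n$ as soon as $n > \epsilon$. To see this I would produce an explicit witness triple: take $x = 0$, let $z \in \Z_2^n$ have weight $\wt{z} = \epsilon + 1$ (possible precisely when $n \ge \epsilon + 1$), and let $y$ be obtained from $z$ by clearing all but one of its nonzero coordinates. Then $\dist{x}{y} = 1 \le \epsilon$ (here $\epsilon \ge 1$ is used) and $\dist{y}{z} = \epsilon \le \epsilon$, while $\dist{x}{z} = \epsilon + 1 > \epsilon$. This shows $x \approx y$ and $y \approx z$ but $x \not\approx z$, contradicting the transitivity that $\ker g$ would impose.

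Consequently the equivalence-relation requirement can only be met when $n \le \epsilon$. In that regime every pair of vectors in $\Z_2^n$ satisfies $\dist{x}{y} \le n \le \epsilon$, so \emph{all} hash values are mutually within distance $\epsilon$; by the biconditional they are therefore all $g$-equivalent, which is exactly the statement that $g$ is constant and that $\dist{\Hash(m)}{\Hash(m^*)} \le \epsilon$ for every pair $m, m^*$.

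I expect the main obstacle to be a subtlety of the domain rather than the combinatorics: the biconditional as stated only constrains $g$ on the \emph{image} of $\Hash$, whereas the witness $y$ above is an arbitrary weight-one vector. The argument goes through cleanly once one reads the ``perfect'' $g$ as a map defined on the full output space $\Z_2^n$ (equivalently, assumes $\Hash$ can realize the intermediate values), which is the intended setting since $g$ is meant to detect near-collisions for every pair of outputs. I would make this reading explicit at the outset, after which the transitivity obstruction finishes the proof.
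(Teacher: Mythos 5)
The paper does not actually prove this lemma: it is imported from \cite{dcc_nc} and stated without proof, so there is no in-paper argument to compare against. Your proposal is correct, and it is essentially the contrapositive form of the argument given in the cited source. There, the proof is the ``chain'' version of your observation: since $\epsilon \ge 1$, any two points of $\Z_2^n$ are joined by a path of single-bit flips, each step has Hamming distance $1 \le \epsilon$ and hence equal $g$-values by the forward implication, so $g$ is constant; the reverse implication then forces every pair of outputs to lie within distance $\epsilon$. You instead observe that $g(x)=g(y)$ is transitive while $\dist{x}{y} \le \epsilon$ is not once $n > \epsilon$; your witness triple ($x=0$, a weight-one $y$ dominated by a weight-$(\epsilon+1)$ vector $z$) is correct, giving $\dist{x}{y}=1$, $\dist{y}{z}=\epsilon$, $\dist{x}{z}=\epsilon+1$. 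The two routes are logically equivalent here, because ``$\Z_2^n$ has diameter at most $\epsilon$'' is the same as $n \le \epsilon$; the chain argument is marginally cleaner in that it yields the stated conclusion positively, without splitting into the degenerate case $n \le \epsilon$ and the vacuous case $n > \epsilon$ (in the latter your contradiction shows no such $g$ exists, so the conclusion holds vacuously --- you should say this explicitly rather than leave it implicit). Both versions carry the same caveat, which you correctly flag: the hypothesis only constrains $g$ on the image of $\Hash$, so one must either read the biconditional as holding on all of $\Z_2^n$ or assume $\Hash$ effectively surjective, as is implicit when $\Hash$ is modeled as a random mapping from a much larger domain.
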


So the best we can hope for is a mapping $g\colon\Z_2^n \to \Z_2^k$ that
satisfies
\begin{equation}\label{eq:eps_inj}
g(y) = g(y') \Rightarrow \dist{y}{y'} \leq \epsilon,
\end{equation}
for all $y,y' \in \Z_2^n$. If we recall the requirements of
Th.~\ref{th:coding}, it was stated that $g\circ\Hash$ should act like a random
mapping in order to have the expected cycle and tail lengths of the iteration
of $g\circ\Hash$ to be the same as for a truly random mapping on a space of
size $2^k$.

We formalize this in the following lemma. For this, we assume that the hash
function $\Hash$ acts like a random mapping from a large domain
$D \simeq \Z_2^\ell$ to $\Z_2^n$ (since most hash standards define a maximum
input length). First, we need yet another definition:

\begin{defi}\label{def:balanced}
Let $D,I$ be finite domains. We call a function $g\colon D \to I$
\emph{balanced}, if $\card{I}$ divides $\card{D}$ and for all $z\in I$ we have
$\card{g^{-1}(z)} = \card{D}/\card{I}$.
\end{defi}

\begin{lem}\label{lem:balanced}
Let $\Hash\colon \Z_2^\ell \to \Z_2^n$ be a random mapping. Furthermore,
consider a function $g\colon \Z_2^n \to \Z_2^k$ with $k \le n$. Then, $g$ is
balanced if and only if $g\circ\Hash\colon \Z_2^\ell \to \Z_2^k$ is a random
mapping.
\end{lem}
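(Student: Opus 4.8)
The plan is to reduce the statement to a single pointwise computation of the output distribution of $g\circ\Hash$. Recall that a random mapping $\Z_2^\ell\to\Z_2^k$ is characterized by two properties: each input $x$ is sent to a value uniformly distributed on $\Z_2^k$, and these values are mutually independent across the $2^\ell$ inputs. I would verify these two properties separately for the composition $g\circ\Hash$.

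First I would dispose of the independence condition, which turns out to hold automatically and plays no role in the equivalence. Since $\Hash$ is a random mapping, the values $\Hash(x)$ for distinct $x\in\Z_2^\ell$ are mutually independent; as $g$ is a fixed deterministic function, the composed values $g(\Hash(x))$ are deterministic functions of the independent $\Hash(x)$ and therefore remain mutually independent, no matter what $g$ is. Thus the entire content of the lemma is concentrated in the marginal distribution at a single input.

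The core step is then to compute, for a fixed $x$, the law of $(g\circ\Hash)(x)$. Since $\Hash(x)$ is uniform on $\Z_2^n$, for each $z\in\Z_2^k$ we have
\[ \Pr\bigl[(g\circ\Hash)(x)=z\bigr]
	= \Pr\bigl[\Hash(x)\in g^{-1}(z)\bigr]
	= \frac{\card{g^{-1}(z)}}{2^n}. \]
This marginal equals the uniform distribution on $\Z_2^k$ (value $2^{-k}$ for every $z$) if and only if $\card{g^{-1}(z)}=2^{n-k}$ for all $z\in\Z_2^k$. Because $k\le n$ guarantees that $\card{\Z_2^k}=2^k$ divides $\card{\Z_2^n}=2^n$ with quotient $2^{n-k}$, this last condition is exactly Definition~\ref{def:balanced}, namely that $g$ is balanced.

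Combining the two observations settles both directions simultaneously: $g\circ\Hash$ is a random mapping if and only if its single-input marginal is uniform (independence being free), if and only if every fiber of $g$ has size $2^{n-k}$, if and only if $g$ is balanced. I do not anticipate a genuine obstacle here; the only point demanding care is to fix the probabilistic model of a random mapping precisely enough that ``uniform and independent'' is recognized as the definition, so that the balancedness of $g$ is correctly identified with uniformity of the marginal alone rather than with some stronger joint condition.
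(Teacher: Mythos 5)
Your proof is correct and rests on the same fiber-counting observation as the paper's: the pushforward of the uniform distribution on $\Z_2^n$ under $g$ is uniform on $\Z_2^k$ exactly when every fiber $g^{-1}(z)$ has size $2^{n-k}$. The paper carries out this count jointly over the whole function space (computing $\mathbb{P}(g\circ\Hash = h')$ for arbitrary $h'$, and specializing to constant functions for the converse), whereas you factor it into automatic independence plus a single-input marginal; this is only a repackaging of the same argument, though your converse is slightly more economical.
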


\begin{proof}
Let $g$ be balanced, that is, for all $z\in\Z_2^k$ we have
$\card{g^{-1}(z)} = 2^{n-k}$. The sets $P_z := g^{-1}(z)$ for all $z\in\Z_2^k$
define a disjoint partition of $\Z_2^n$ of size $\card{P_z}=2^{n-k}$ and $g$ is
constant on each set $P_z$.

Now let $\Hash$ be drawn uniformly at random from the set of all functions
$\Z_2^\ell \to \Z_2^n$, that is, for any function $h\colon\Z_2^\ell \to \Z_2^n$
we have $\mathbb{P}(\Hash = h) = 2^{-n 2^\ell}$. For a given
$h'\colon \Z_2^\ell \to \Z_2^k$, we now want to compute the probability
$\mathbb{P}(g\circ\Hash = h')$, for which we get
\begin{equation}\label{eq:probs}
\begin{aligned}
\mathbb{P}(g\circ\Hash = h')
	&= 2^{-n 2^\ell} \card{\{h\colon\Z_2^\ell \to \Z_2^n \setsep
		g(h(x)) = h'(x) \text{ for all } x\in\Z_2^\ell \}} \\
	&= 2^{-n 2^\ell} \card{\{h\colon\Z_2^\ell \to \Z_2^n \setsep
		h(x) \in P_{h'(x)} \text{ for all } x\in\Z_2^\ell \}} \\
	&= 2^{-n 2^\ell} 2^{(n-k) 2^\ell}
	= 2^{-k 2^\ell},
\end{aligned}
\end{equation}
because $\card{P_z}=2^{n-k}$ for all $z$. In other words, $g\circ\Hash$ is a
random mapping.

Now assume that $g\circ\Hash$ is a random mapping. That means, that for every
$h'\colon\Z_2^\ell \to \Z_2^k$ we have
$\mathbb{P}(g\circ\Hash = h') = 2^{-k 2^\ell}$. This stays true, if we choose
$h'$ to be one of the $2^k$ constant functions. If we argue along the same
lines as in \eqref{eq:probs}, we get
\[ 2^{-k 2^\ell}
	= 2^{-n 2^\ell} \card{\{h\colon \Z_2^\ell \to \Z_2^n \setsep
		g(h(x)) = c \text{ for all } x\in\Z_2^\ell \}} \]
for all $c \in \Z_2^k$. Again, with $P_c = g^{-1}(c)$, we have
\[ 2^{(n-k) 2^\ell}
	= \card{\{h\colon \Z_2^\ell \to \Z_2^n \setsep
		h(x) \in P_c \text{ for all } x\in\Z_2^\ell \}}. \]
This leaves us with $\card{P_c} = \card{g^{-1}(c)} = 2^{n-k}$ for all
$c\in\Z_2^k$, and thus, $g$ is balanced.
\end{proof}

Lem.~\ref{lem:balanced} teaches us, that in a memoryless near-collision
algorithm based on the iteration of the concatenation of the hash function
$\Hash$ and a function $g$, additionally to the requirement \eqref{eq:eps_inj}
we need $g$ also to be balanced. In the remaining part of this section, we want
to show that this limits our choices basically to the known candidates for $g$.

For the proof of the next proposition, we will need a lemma which goes back to
a conjecture by Erd\H{o}s. The solution of this problem by Kleitman in
\cite{Kleitman1966OnACombinatorial}, was further investigated in
\cite{Bezrukov1987OnTheDescription}. Let $\diam{A}$ be the diameter of a set
$A \subset \Z_2^n$, \ie, $\diam{A} := \max_{x,y\in A} \dist{x}{y}$. We now
collect the results of Th.~1 and Th.~2 of \cite{Bezrukov1987OnTheDescription}
in the following lemma:

\begin{lem}\label{lem:diam}
Let $s$ be a non-negative integer.
\begin{enumerate}[\normalfont (i)]
\item The Hamming balls $B_s(x)$ for any $x \in \Z_2^n$ are the sets of maximal
	size among all sets $A\subset \Z_2^n$ with $\diam{A} = 2s < n-1$.
\item The sets $B_s(x) \cup B_s(y )$ for any $x,y \in \Z_2^n$ with
	$\dist{x}{y}=1$ are sets of maximal size among all sets
	$A\subset \Z_2^n$ with $\diam{A} = 2s+1 < n-1$.
\end{enumerate}
\end{lem}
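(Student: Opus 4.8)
The statement is precisely Kleitman's solution of Erd\H{o}s's diameter problem, with Bezrukov's theorems supplying the description of the extremal sets, so the plan is to reprove the size bound by induction and then read off the maximizers by tracking equality. Identifying a vector with the set of its nonzero coordinates turns $\dist{x}{y}$ into the cardinality of a symmetric difference, and both parts become a single assertion: writing $M(n,D)$ for the maximum of $\card A$ over $A\subset\Z_2^n$ with $\diam A\le D$, one has $M(n,2s)=S_n(s)$ and $M(n,2s+1)=2\,S_{n-1}(s)$. A ball $B_s(x)$ has diameter exactly $2s$ and an adjacent union $B_s(x)\cup B_s(y)$ with $\dist{x}{y}=1$ has diameter exactly $2s+1$ (here the hypothesis $D<n-1$ keeps us away from the degenerate regime where antipodal points interfere), so replacing ``$=$'' by ``$\le$'' in the hypotheses is harmless, and a direct count shows these two families attain the claimed bounds.

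To prove the bounds I would induct on $n$, splitting the cube along the last coordinate: set $A_0=\{z\in\Z_2^{n-1}\setsep (z,0)\in A\}$ and $A_1=\{z\in\Z_2^{n-1}\setsep (z,1)\in A\}$, so that $\card A=\card{A_0}+\card{A_1}$. The hypothesis $\diam A\le D$ unpacks into three conditions: $\diam{A_0}\le D$ and $\diam{A_1}\le D$ for pairs agreeing in the last bit, together with the \emph{cross} condition $\dist{x}{y}\le D-1$ for all $x\in A_0$, $y\in A_1$, coming from pairs that differ there. The decisive feature is the parity flip in the cross condition, which is exactly why the even and odd cases must be carried through the induction together.

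The target recursion is $M(n,D)=M(n-1,D)+M(n-1,D-2)$, which one checks is consistent with the two closed forms via Pascal's identity; the whole problem thus reduces to the matching upper bound $\card{A_0}+\card{A_1}\le M(n-1,D)+M(n-1,D-2)$. This is the main obstacle, and it is genuinely delicate. The obvious estimate — bounding $\card{A_0\cup A_1}$ by $M(n-1,D)$ (diameter $\le D$) and $\card{A_0\cap A_1}$ by $M(n-1,D-1)$ (diameter $\le D-1$) — only yields $M(n-1,D)+M(n-1,D-1)$, which \emph{strictly} overshoots the truth, the gap being $\binom{n-2}{s-1}$ in the even case and $\binom{n-2}{s}$ in the odd case. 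The savings that turns the second term from $D-1$ into $D-2$ is precisely Kleitman's contribution: one must establish a \emph{cross-diameter lemma} showing that $A_0$ and $A_1$ cannot simultaneously make their union a maximal diameter-$D$ family and their intersection a maximal diameter-$(D-1)$ family, with the slack forced by the parity of $D$. I expect proving this cross-diameter bound, rather than the bookkeeping around it, to be where essentially all the work lies.

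Finally, to recover Bezrukov's structural claim that balls, respectively adjacent unions of balls, are the maximizers, I would trace the equality case back through the induction: equality forces $A_0\cup A_1$ to be a maximal diameter-$D$ family in $\Z_2^{n-1}$ — hence a ball or an adjacent ball-union by the inductive characterization — and forces the overlap $A_0\cap A_1$ to be correspondingly rigid, so that the two slices reassemble into a single ball (for even $D$) or a single adjacent union (for odd $D$) in $\Z_2^n$. Aligning the centers of the lower-dimensional maximizers across the two slices is the one remaining piece of bookkeeping that needs care, but it is routine once the cross-diameter lemma and its equality case are in hand.
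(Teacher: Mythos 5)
Your reduction is set up correctly (the slicing $A_0,A_1$, the three diameter conditions, the identity $\card{A_0}+\card{A_1}=\card{A_0\cup A_1}+\card{A_0\cap A_1}$, and the arithmetic showing the naive bound $M(n-1,D)+M(n-1,D-1)$ overshoots by $\binom{n-2}{s-1}$ resp.\ $\binom{n-2}{s}$ all check out), but the proposal stops exactly where the theorem begins. The entire content of Kleitman's result is the improvement of the second term from $M(n-1,D-1)$ to $M(n-1,D-2)$, and you explicitly defer this to an unstated ``cross-diameter lemma'' that you neither formulate precisely nor prove. As written, the argument is a correct reformulation of the problem plus an acknowledgment that the hard part remains; it is not a proof. (For the record, Kleitman's actual argument does not simply run your recursion: in the even case he works with a carefully chosen maximal element of the family and a counting/exchange argument to couple the two slices, and the parity bookkeeping is where the saving of one unit of diameter comes from. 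Without that mechanism the induction does not close.) The equality analysis needed for Bezrukov's structural statement inherits the same gap: you cannot trace equality through an inequality you have not established, and the rigidity of the extremal configurations is itself the subject of a separate paper.

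You should also note that the paper does not prove this lemma at all: it is imported verbatim as a combination of Theorems~1 and~2 of Bezrukov's paper, which in turn builds on Kleitman's solution of the Erd\H{o}s conjecture. So the appropriate ``proof'' here is a citation, and if you do want a self-contained argument you must actually supply the cross-diameter estimate rather than name it. A fully worked proof along your inductive lines is possible but substantially longer than the sketch suggests, and the one sentence you devote to the decisive step is the one place where no detail is given.
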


With this auxiliary result, we can now formulate the main result of this
section.

\begin{thm}\label{th:impossible}
Let $1 \le \epsilon < \frac n2$ be given and let $g\colon \Z_2^n \to I$ be a
balanced function satisfying \eqref{eq:eps_inj}, that is,
\[ g(y) = g(y') \Rightarrow \dist{y}{y'} \le \epsilon \]
for all $y,y'\in\Z_2^n$. Then, $\card{I}$ must satisfy
\begin{equation}\label{eq:bound_k}
\card{I} \ge 2^n / S_n(\lceil \epsilon/2\rceil)).
\end{equation}
\end{thm}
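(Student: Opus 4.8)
The plan is to read condition \eqref{eq:eps_inj} as a bound on the diameters of the fibers of $g$, and then bound those fiber sizes using Lemma~\ref{lem:diam}. First I would fix $z\in I$ and set $A_z := g^{-1}(z)$. If $y,y'\in A_z$ then $g(y)=g(y')=z$, so \eqref{eq:eps_inj} forces $\dist{y}{y'}\le\epsilon$; hence $\diam{A_z}\le\epsilon$. Because $g$ is balanced (Definition~\ref{def:balanced}), every fiber has the same cardinality $\card{A_z}=2^n/\card{I}$. Thus the desired lower bound on $\card{I}$ is equivalent to the upper bound $\card{A_z}\le S_n(\lceil\epsilon/2\rceil)$ on the common fiber size, \ie\ to an upper bound on the size of any subset of $\Z_2^n$ of diameter at most $\epsilon$.

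To obtain this bound I would invoke Lemma~\ref{lem:diam}, splitting according to the parity of $\epsilon$. The hypothesis $\epsilon<\frac n2$ yields $\epsilon<n-1$, so the restriction $\diam{A_z}\le\epsilon<n-1$ required by that lemma is satisfied. In the even case $\epsilon=2s$, part (i) gives directly $\card{A_z}\le S_n(s)=S_n(\lceil\epsilon/2\rceil)$. In the odd case $\epsilon=2s+1$, part (ii) gives $\card{A_z}\le\card{B_s(x)\cup B_s(y)}$ for some $x,y$ with $\dist{x}{y}=1$. The step I would single out here is the elementary inclusion $B_s(x)\cup B_s(y)\subseteq B_{s+1}(x)$, which holds whenever $\dist{x}{y}=1$: any $w\in B_s(y)$ satisfies $\dist{w}{x}\le\dist{w}{y}+\dist{y}{x}\le s+1$ by the triangle inequality. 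This gives $\card{A_z}\le S_n(s+1)=S_n(\lceil\epsilon/2\rceil)$, so both cases collapse to the single bound $\card{A_z}\le S_n(\lceil\epsilon/2\rceil)$.

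Combining the two ingredients, $2^n/\card{I}=\card{A_z}\le S_n(\lceil\epsilon/2\rceil)$, which rearranges to \eqref{eq:bound_k}. One technical point I would address is that Lemma~\ref{lem:diam} characterizes the extremal sets of a given \emph{exact} diameter, whereas the fibers only satisfy $\diam{A_z}\le\epsilon$; since $S_n$ is increasing and the extremal sizes for a diameter $d<\epsilon$ are themselves bounded by $S_n(\lceil\epsilon/2\rceil)$, the bound at the top diameter $\epsilon$ dominates and no fiber can be larger. The only real obstacle is the odd case: without the inclusion $B_s(x)\cup B_s(y)\subseteq B_{s+1}(x)$ one is left with the exact union cardinality, and the clean uniform statement $S_n(\lceil\epsilon/2\rceil)$ relies precisely on replacing that union by the slightly larger ball $B_{s+1}(x)$.
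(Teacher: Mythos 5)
Your proposal is correct and follows essentially the same route as the paper: balancedness gives equal-size fibers $2^n/\card{I}$, condition \eqref{eq:eps_inj} gives $\diam{g^{-1}(z)}\le\epsilon$, and Lemma~\ref{lem:diam} bounds the fiber size by parity of $\epsilon$, with the odd case collapsed into $S_n(\lceil\epsilon/2\rceil)$. The only cosmetic difference is that the paper bounds the odd case via the explicit cardinality $S_n(\frac{\epsilon-1}{2})+\binom{n-1}{(\epsilon-1)/2}\le S_n(\frac{\epsilon+1}{2})$, whereas you use the inclusion $B_s(x)\cup B_s(y)\subseteq B_{s+1}(x)$; these are trivially equivalent.
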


\begin{proof}
In the proof of Lem.~\ref{lem:balanced} we have seen, that the balancedness of
$g$ implies a disjoint partition $\bigcup_z P_z$ of $\Z_2^n$ where the size of
each set $P_z$ is $2^n/\card{I}$. The sets $P_z$ are exactly such that
$g(x) = z$ for all $x\in P_z$. Taking property \eqref{eq:eps_inj} into account,
we need that $\diam{P_z} \le \epsilon$. Therefore, Lem.~\ref{lem:diam} teaches
us that if $\epsilon$ is even, we have $2^n/\card{I} \le S_n(\epsilon/2)$ and
$2^n/\card{I} \le S_n(\frac{\epsilon-1}{2}) + \binom{n-1}{(\epsilon-1)/2}$,
for odd $\epsilon$. Since
$S_n(\frac{\epsilon-1}{2}) + \binom{n-1}{(\epsilon-1)/2}
	\le S_n(\frac{\epsilon+1}{2})$,
we can unify the expressions to \eqref{eq:bound_k}.
\end{proof}

As a consequence we get the following corollary:

\begin{cor}\label{cor:limits}
Let $\Hash$ be an $n$-bit hash function, let $1 \le \epsilon < \frac n 2$ and
let $g$ be a balanced function satisfying \eqref{eq:eps_inj}. Then, the
complexity to find an $\epsilon$-near-collision by applying a cycle-finding
algorithm to the concatenation $g\circ\Hash$ is bounded from below by
$\Omega(2^{n/2} S_n(\lceil \epsilon/2 \rceil)^{-1/2})$.
\end{cor}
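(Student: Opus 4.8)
The plan is to combine the structural lower bound on the codomain size furnished by Theorem~\ref{th:impossible} with the standard analysis of cycle-finding algorithms on random mappings; almost all of the combinatorial work has already been carried out, so the corollary should reduce to a substitution. The starting observation is that any memoryless algorithm of the type under consideration iterates the map $g\circ\Hash$ and terminates upon detecting a collision $g(\Hash(m)) = g(\Hash(m^*))$. By the defining property \eqref{eq:eps_inj} of $g$, such a collision immediately yields $\dist{\Hash(m)}{\Hash(m^*)} \le \epsilon$, so the procedure is sound and its query-complexity is exactly the complexity of finding a collision for $g\circ\Hash$.

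Next I would invoke Lemma~\ref{lem:balanced}. Since $g$ is assumed balanced and $\Hash$ behaves like a random mapping, the composition $g\circ\Hash\colon\Z_2^\ell \to I$ again behaves like a random mapping onto the codomain $I$. Writing $K := \card{I}$, the classical results on iterated random mappings (\cf\ \cite{FlajoletO1989Random,Harris1960Probability}), already used to justify Lemma~\ref{lem:plain_trunc} and Theorem~\ref{th:coding}, tell us that the expected number of function invocations performed by a cycle-finding algorithm before a collision is detected is $\Theta(\sqrt K)$. Hence the expected query-complexity of the attack is $\Theta(\sqrt{\card{I}})$.

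Finally I would substitute the bound from Theorem~\ref{th:impossible}. Because $g$ is balanced and satisfies \eqref{eq:eps_inj} with $1 \le \epsilon < \frac n2$, that theorem gives $\card{I} \ge 2^n / S_n(\lceil\epsilon/2\rceil)$. As $\sqrt{\,\cdot\,}$ is monotonically increasing, the query-complexity $\Theta(\sqrt{\card{I}})$ is bounded below by
\[
	\sqrt{2^n / S_n(\lceil\epsilon/2\rceil)}
	= 2^{n/2}\,S_n(\lceil\epsilon/2\rceil)^{-1/2},
\]
which is precisely the asserted $\Omega\bigl(2^{n/2} S_n(\lceil\epsilon/2\rceil)^{-1/2}\bigr)$ bound.

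The one point I would treat most carefully, and the only conceptual obstacle, is the passage from the phrase ``acts like a random mapping'' to a genuine $\Omega$-lower bound on complexity: the $\Theta(\sqrt K)$ estimate is an expected-value statement valid under the random-mapping heuristic, so the conclusion must be explicitly understood within the same idealized model in which Theorem~\ref{th:coding} and Lemma~\ref{lem:plain_trunc} are phrased. Everything beyond this is immediate substitution, since the heavy lifting — bounding $\card{I}$ through the Kleitman–Bezrukov diameter result (Lemma~\ref{lem:diam}) — has already been done in Theorem~\ref{th:impossible}.
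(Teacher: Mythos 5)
Your proposal is correct and is exactly the argument the paper intends: the corollary is stated as an immediate consequence of Theorem~\ref{th:impossible}, obtained by combining the lower bound $\card{I} \ge 2^n/S_n(\lceil\epsilon/2\rceil)$ with the standard $\Theta(\sqrt{\card{I}})$ query-complexity of cycle-finding on a random mapping (justified via Lemma~\ref{lem:balanced}). Your closing caveat about the random-mapping heuristic is a fair and accurate reading of the model in which the whole paper operates.
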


\begin{table}[ht]
\begin{center}
\caption{Methods for finding $\epsilon$-near-collisions of an $n$-bit hash
	function $H$.}
\label{t:methods}
\medskip
\footnotesize
\begin{tabular}{@{}>{\RaggedRight}p{27mm}>{\RaggedRight}p{28mm}%
	>{\RaggedRight}p{36mm}>{\RaggedRight}p{48mm}@{}}
\toprule
\textbf{short explanation}
	& \textbf{memory}
	& \textbf{complexity}
	& \textbf{remarks} \\
\midrule
cycle finding approach applied to an $\epsilon$-truncation of $H$
	& negligible \newline (memory is only required for cycle finding)
	& $2^{(n-\epsilon)/2}$
	& \cf\ Lemma~\ref{lem:plain_trunc} and \cite{Harris1960Probability}; \\
\midrule
cycle finding approach applied to an $2\epsilon+1$-truncation of $H$
	& negligible \newline (memory is only required for cycle finding)
	& $2^{(n+1)/2-\epsilon}$
	& \cf\ Remark~\ref{rem:trunc_dcc} and \cite{dcc_nc};
		(A) in Table~\ref{t:rho_general}; \\
\midrule
cycle finding approach applied to an optimized $\mu$-truncation of $H$
		($\mu>\epsilon$)
	& negligible \newline (memory is only required for cycle finding)
	& $2^{(n+\mu)/2} S_\mu(\epsilon)^{-1}$
	& optimal $\mu=\mu(\epsilon)$ is unique and
		$\mu\sim(2+\sqrt2)(\epsilon-1)$,
		\cf\ Theorem~\ref{th:trunc_opt};
		(B) in Table~\ref{t:rho_general}; \\
\midrule
table based approach
	& a table of exponential size in $n$ for the pairs $(m,H(m))$
	& $2^{n/2} S_n(\epsilon)^{-1/2}$
	& \cf\ Lemma~\ref{lem:memory_NC} and \cite{dcc_nc};
		(C) in Table~\ref{t:rho_general}; \\
\midrule
coding based approach
	& negligible \newline (memory is only required for coding and cycle
		finding)
	& for even $\epsilon=2R$: \newline
		$2^{(n-\ell R-r)/2}$, where \newline
		$\ell := \lfloor \log_2(n/R+1) \rfloor$, \newline
		$r := \lfloor (n-R(2^\ell-1))/2^\ell \rfloor$
	& \cf\ \cite{dcc_nc,sacryptLambergerR10};
		(D) in Table~\ref{t:rho_general}; \newline
		for odd $\epsilon$ the coding based approach for $\epsilon+1$
		is repeated until an $\epsilon$-near-collision is found,
		\cf\ Remark~\ref{rem:code_prob}; \\
\bottomrule
\end{tabular}
\end{center}
\end{table}

\begin{table}[ht]
\begin{center}
\caption{For given $\epsilon \in \{1,\dots,8\}$ and hash length
	$n\in\{160,256,512\}$, the table compares the base-2 logarithms of the
	complexities (A) -- (D) of Table~\ref{t:methods}, together with (E)
	which is the bound of Corollary~\ref{cor:limits}.}
\label{t:rho_general}
\medskip
\footnotesize
\scalebox{0.86}{\begin{tabular}{@{}p{1.41mm}*{15}{>{\hfill}p{8.01mm}}@{}}
\toprule
& \multicolumn{5}{c}{$n=160$}
& \multicolumn{5}{c}{$n=256$}
& \multicolumn{5}{c}{$n=512$} \\
\cmidrule(lr){2-6} \cmidrule(lr){7-11} \cmidrule(l){12-16}
$\epsilon$
	& (A) & (B) & (C) & (D) & (E)
	& (A) & (B) & (C) & (D) & (E)
	& (A) & (B) & (C) & (D) & (E) \\
\midrule
1
	&  79.5 &  79.4 &  76.3 &  81.9 &  76.3
	& 127.5 & 127.4 & 124.0 & 130.4 & 124.0
	& 255.5 & 255.4 & 251.5 & 258.9 & 251.5 \\
2
	&  78.5 &  78.5 &  73.2 &  76.5 &  76.3
	& 126.5 & 126.5 & 120.5 & 124.0 & 124.0
	& 254.5 & 254.5 & 247.5 & 251.5 & 251.5 \\
3
	&  77.5 &  77.5 &  70.3 &  77.5 &  73.2
	& 125.5 & 125.5 & 117.3 & 125.4 & 120.5
	& 253.5 & 253.5 & 243.8 & 253.4 & 247.5 \\
4
	&  76.5 &  76.4 &  67.7 &  74.0 &  73.2
	& 124.5 & 124.4 & 114.3 & 121.0 & 120.5
	& 252.5 & 252.4 & 240.3 & 248.0 & 247.5 \\
5
	&  75.5 &  75.2 &  65.2 &  74.0 &  70.3
	& 123.5 & 123.2 & 111.5 & 121.7 & 117.3
	& 251.5 & 251.2 & 237.0 & 249.1 & 243.8 \\
6
	&  74.5 &  74.1 &  62.8 &  71.5 &  70.3
	& 122.5 & 122.1 & 108.8 & 118.5 & 117.3
	& 250.5 & 250.1 & 233.8 & 245.0 & 243.8 \\
7
	&  73.5 &  72.9 &  60.6 &  71.3 &  67.7
	& 121.5 & 120.9 & 106.2 & 118.5 & 114.3
	& 249.5 & 248.9 & 230.7 & 245.5 & 240.3 \\
8
	&  72.5 &  71.7 &  58.5 &  69.5 &  67.7
	& 120.5 & 119.7 & 103.7 & 116.0 & 114.3
	& 248.5 & 247.7 & 227.7 & 242.0 & 240.3 \\
\bottomrule
\end{tabular}}
\end{center}
\end{table}

\section{Conclusion}
\label{sec:conclusion}

At the moment, a lot of effort is dedicated to the cryptanalysis of concrete
hash function designs. From a theoretical perspective it is still very
important to investigate generic aspects of non-random properties of hash
functions. In this paper, we have analyzed several aspects of the question of
finding near-collisions in a memoryless way. This problem has recently been
investigated in \cite{dcc_nc,sacryptLambergerR10}. All these methods rely on
the application of a cycle-finding technique to an alteration (that is,
concatenation with a new mapping) of the hash function. We have investigated
in full detail the complexity of a probabilistic version of the simple
truncation based approach. Furthermore, we have shown that the approach in
general is limited in its capabilities, in the sense, that if $g$ is such
that finding a collision for $g\circ\Hash$ implies a near-collision for
$\Hash$, the query-complexity of this approach is always higher than the
query-complexity of a birthday-like method using a table of exponential size.
A comparison of the known methods is compiled in Tables~\ref{t:methods} and
\ref{t:rho_general}. It has to be noted that in practice the real complexity of
the table-based method will be dominated by the table queries and not by the
hash computations.

\section*{Acknowledgements}

The authors wish to thank the anonymous referee for valuable comments. The work
in this paper has been supported in part by the Austrian Science Fund (FWF),
project P21936-N23 and by the European Commission under contract
ICT-2007-216646 (ECRYPT II).

\def\doi#1{\href{http://dx.doi.org/#1}{\protect\nolinkurl{doi:#1}}}
\bibliographystyle{amsplainurl}
\bibliography{nearcolls30-IPL}

\parindent=0pt

\end{document}